\newtheorem{theorem}{Theorem}
\newtheorem{proposition}[theorem]{Proposition}
\definecolor{codegreen}{rgb}{0,0.6,0}
\definecolor{codegray}{rgb}{0.5,0.5,0.5}
\definecolor{codepurple}{rgb}{0.58,0,0.82}
\definecolor{backcolour}{rgb}{0.95,0.95,0.92}
\lstdefinestyle{mystyle}{
    backgroundcolor=\color{backcolour},
    commentstyle=\color{codegreen},
    keywordstyle=\color{magenta},
    numberstyle=\tiny\color{codegray},
    stringstyle=\color{codepurple},
    basicstyle=\ttfamily\footnotesize,
    breakatwhitespace=false,         
    breaklines=true,
    captionpos=b,
    keepspaces=true,
    numbers=left,
    numbersep=5pt,
    showspaces=false,
    showstringspaces=false,
    showtabs=false,
    tabsize=2
}
\newcommand{\blind}{1}
\begin{document}

\def\spacingset#1{\renewcommand{\baselinestretch}%
{#1}\small\normalsize} \spacingset{1}


\if1\blind
{
  \title{\bf Reducing bias and variance in quantile estimates with an exponential model}
  \author{Rohit Pandey\hspace{.2cm}\\
    Principal Data Scientist, Microsoft Azure\\
    }
  \maketitle
} \fi

\if0\blind
{
  \bigskip
  \bigskip
  \bigskip
  \begin{center}
    {\LARGE\bf Reducing bias and variance in quantile estimates with an exponential model}
\end{center}
  \medskip
} \fi

\bigskip
\begin{abstract}
Percentiles and more generally, quantiles are commonly used in various contexts to summarize data. For most distributions, there is exactly one quantile that is unbiased. For distributions like the Gaussian that have the same mean and median, that becomes the medians. There are different ways to estimate quantiles from finite samples described in the literature and implemented in statistics packages. It is possible to leverage the memory-less property of the exponential distribution and design high quality estimators that are unbiased and have low variance and mean squared errors. Naturally, these estimators out-perform the ones in statistical packages when the underlying distribution is exponential. But, they also happen to generalize well when that assumption is violated.
\end{abstract}

\noindent%
{\it Keywords:}  percentiles, quartiles, bias-removal, median, memory-less
\vfill

\newpage
\spacingset{1.45} 

\section{Introduction}
\label{sec:intro}

Let's establish some definitions and notation. In doing so, I'll lean heavily on \cite{stat_pckgs}. The definition of the quantile of a distribution is as follows:

$$Q(p) = F^{-1}(p) = \inf\{x: F(x) \geq p\}, \;\;\;\;0<p<1$$

with $F(x)$ being the cumulative distribution function (CDF) of the distribution. The special case, $Q(0.5)$ is the median of the distribution.

In practical applications, we observe a finite number of (say $n$) samples from the underlying distribution, $\{X_1, X_2, \dots, X_n\}$ and set out to estimate $Q(p)$ from them. There are many ways of doing this, leading to different estimators for the same underlying property. Let's call the result of the $i$-th method for doing the estimator $\hat{Q}_i(p)$. 

The sample quantiles in most statistical packages are based on one or two order statistics and can be expressed as:

$$\hat{Q}_i(p) = (1-f)X_{(j)}+f X_{(j+1)}$$

For some $1\leq j \leq n$ and $0 \leq f \leq 1$. In \cite{harrel} and \cite{sheather}, estimators that don't conform to the format above are discussed. However, these are not implemented in popular statistical packages. In some of the new estimators I'll introduce, the condition on $f$ will be relaxed to $f \in \Bbb{R}$.

In a lot of instances, quantiles are used on distributions supported on $[0, \infty)$. For instance, household incomes, heights of people, etc. In the context of my work in cloud computing, latencies for operations like starting machines, blackout durations when performing certain kinds of servicing operations are other examples. In a lot of instances, such distributions tend to be heavy tailed with decreasing hazard rates. In this paper, we'll primarily be concerned with how the various estimators do on such distributions.

To measure the performance of estimators, two important criterion are the bias and variance. These can be combined into a single measure, the mean squared error (MSE). No matter what methodology we use, it is near impossible to eliminate the bias for quantiles across all underlying distributions and all quantiles (like it is when using the sample mean). 

In this paper, we'll first measure the performance of the nine strategies described in \cite{stat_pckgs} on these criterion for various underlying distributions. These nine strategies are implemented in the same order as the paper in the statistical programming language, R (see \href{https://www.rdocumentation.org/packages/stats/versions/3.6.2/topics/quantile}{the documentation}). Then, we'll introduce a new family of estimators that are based on the exponential distribution and pit them against the winning estimators and find their performance to be favorable on values of $p<0.6$. For the purpose of this study, I've created a new Python package (\href{https://github.com/ryu577/statest}{statest}) which has routines implementing all $9$ estimators from R, the new strategies based on the exponential distribution as well as simulators that measure the performance (bias, variance, MSE) of these strategies on various underlying distributions via Monte Carlo simulation.

\section{Existing quantile estimators in statistical packages}
For measuring the performance of the existing nine implementations, I tried to pick a range of distributions with a wide range of tail behaviors. The following is a list as of this writing: Normal, LogNormal (heavy tailed), Lomax (so heavy tailed the mean doesn't exist), LogLogistic (again, mean doesn't exist), Exponential (memory-less), Weibull (lighter tailed than exponential). In general, the conclusions for any distribution were largely invariant of the parameters of the distribution. So, I won't be specifying what parameters were used for each of them. Anyone that has access to a basic Python environment with the libraries `numpy' and `scipy' can re-create all the simulations and raw data (see readme file of the \href{https://github.com/ryu577/statest}{statest} library).

Figure \ref{fig:r_weibull_biases} shows the biases of the nine estimators available in R (and now Python thanks to the \href{https://github.com/ryu577/statest}{statest library}) on the Weibull distribution. The conclusion is similar for all other distributions. We can rule out estimators other than 1, 2, 3, 4 and 7 due to their very poor bias. And since 1, 2 and 3 have a discountinuous behavior with the quantile ($q$) being estimated (without doing substantially better on the bias criterion for any underlying distribution tested), we won't consider them as well.

\begin{figure}
\begin{center}
\includegraphics[height=5in, width=5in]{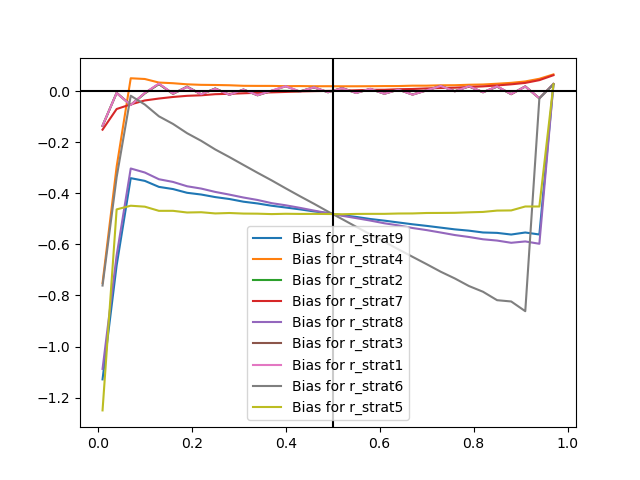}
\end{center}
\caption{The bias of the various estimators available in R on a sample size of 15 for the Weibull distribution. \label{fig:r_weibull_biases}}
\end{figure}

This leaves us with estimators 4 and 7. The definition of estimator 4 can be found in \cite{stat_pckgs}. Since we're going to motivate a new estimator based on estimator 7 and its also the default estimator in all statistical packages, I will briefly describe it.

\subsection{Estimator 7: default quantile estimator in statistical packages}

The most familiar estimator, which is the default in most libraries (the default behavior of R's quantile function and Python numpy's percentile function for example) can be motivated by the median. If our sample has $n=2l+1$ data points, the median should just be the middle order statistic, $X_{(l+1)}$. If if has $2l$ data points, there are two "middle order statistics", $X_{(l)}$ and $X_{(l+1)}$. So, we take the mid-point or average of them. In other words, we linearly interpolate between them. This is a logical strategy that will appeal to most people's intuition. Extending it to the $q$-percentile, it's not too hard to see that the order statistics, $X_{(i)}$ and $X_{(i+1)}$ we'll be interpolating between satisfy:

\begin{align}
\lfloor i-1 \rfloor &= \lfloor q(n-1) \rfloor \notag\\
=> i &= \lfloor q(n-1) \rfloor + 1 \label{eqn:ord_i}
\end{align}

Since this is the seventh strategy in \cite{stat_pckgs}, we'll call it $\hat{Q}_7(q)$. The linear interpolation gives us - 

$$\frac{X_{(i+1)}-X_{(i)}}{\left( \frac{1}{n-1} \right)} = \frac{\hat{Q}_7(q)-X_{i}}{q-\frac{\lfloor q(n-1)\rfloor}{n-1}}$$

$$=>\hat{Q}_7(q)=X_{(i+1)}(q(n-1)-\lfloor q(n-1)\rfloor)-X_{(i)}(q(n-1)-\lfloor q(n-1)\rfloor-1)$$

\begin{equation}=>\hat{Q}_7(q)=X_{(i+1)} \{q(n-1)\} + X_{(i)}(1-\{q(n-1)\}) \label{eqn:u}\end{equation}

Here, $x-\lfloor x \rfloor = \{x\}$ is the fractional part of $x$. Among the nine estimators described in \cite{stat_pckgs} (and implemented in the same order in R), this one appears as estimator 7. We'll refer to it simply as the `seventh estimator' going forward. 

\section{Measuring bias for the exponential distribution}

Since we're concerned with estimating quantiles for distributions that have support on $[0,\infty)$, it makes sense to start with the simplest such distribution, the Exponential distribution whose memory-less property gives us the ability to answer many questions about it analytically. 

\begin{proposition}
The order statistic of the exponential distribution (courtesy Renyi, see \cite{renyi}) is given by:

\begin{equation}X_{(i)} = \sum\limits_{j=1}^i \frac{Z_j}{n-j+1}\label{exp_ord}\end{equation}

Here, the $Z_j$ are i.i.d exponential themselves with rate $1$. 
\end{proposition}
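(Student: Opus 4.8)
The plan is to establish the Rényi representation by showing that the spacings between consecutive order statistics are independent exponentials with the stated rates, and then telescoping.

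First I would compute the law of $X_{(1)} = \min(X_1,\dots,X_n)$: since $P(X_{(1)} > x) = \prod_{k=1}^n P(X_k > x) = e^{-nx}$, we get $X_{(1)} \sim \mathrm{Exp}(n)$, which is exactly the distribution of $Z_1/n$. The crucial step is then to peel off this minimum using the memory-less property. Conditioning on which index $m$ attains the minimum and on the value $X_{(1)} = x$, independence of the $X_k$ together with the memory-less property shows that each residual $X_k - x$, for the $n-1$ indices $k \neq m$, is again $\mathrm{Exp}(1)$ and independent of $x$. Hence $X_{(2)} - X_{(1)}$ is the minimum of $n-1$ i.i.d.\ $\mathrm{Exp}(1)$ variables, i.e.\ $\mathrm{Exp}(n-1) \stackrel{d}{=} Z_2/(n-1)$, independent of $X_{(1)}$. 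Iterating, the $i$-th spacing $D_i := X_{(i)} - X_{(i-1)}$ (with $X_{(0)} := 0$) satisfies $D_i \sim \mathrm{Exp}(n-i+1) \stackrel{d}{=} Z_i/(n-i+1)$, and $D_1,\dots,D_n$ are mutually independent; telescoping $X_{(i)} = \sum_{j=1}^i D_j$ gives the claimed identity.

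I expect the main obstacle to be making the ``iterate the memory-less property'' step fully rigorous — that is, justifying that after deleting the current minimum the residual lifetimes of the survivors form a fresh i.i.d.\ $\mathrm{Exp}(1)$ sample, independent of everything observed so far. A clean way to sidestep this is a direct change of variables: the joint density of $(X_{(1)},\dots,X_{(n)})$ is $n!\,e^{-\sum_{k} x_k}$ on the region $\{0 < x_1 < \dots < x_n\}$; substituting the spacings $d_i = x_i - x_{i-1}$ (Jacobian $1$) and using $\sum_{k} x_k = \sum_{i} (n-i+1)\, d_i$, the joint density of $(D_1,\dots,D_n)$ factorizes as $\prod_{i=1}^n (n-i+1)\, e^{-(n-i+1) d_i}$, so the $D_i$ are independent with $D_i \sim \mathrm{Exp}(n-i+1)$. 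Setting $Z_j := (n-j+1) D_j$ yields i.i.d.\ $\mathrm{Exp}(1)$ variables with $X_{(i)} = \sum_{j=1}^i D_j = \sum_{j=1}^i \frac{Z_j}{n-j+1}$.

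I would lead with the spacings/Jacobian computation since it is short and self-contained, and keep the memory-less argument as the conceptual explanation of why the spacings decouple and why the $i$-th spacing ``sees'' only $n-i+1$ surviving samples. Both routes assume only the exponential density and basic order-statistic facts, so no machinery beyond the excerpt is needed.
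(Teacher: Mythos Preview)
Your memory-less peeling argument is exactly the paper's proof: it observes that $X_{(1)}$ is the minimum of $n$ exponentials, hence $\mathrm{Exp}(n)$, and then invokes the memory-less property to say that after $X_{(1)}$ the remaining $n-1$ exponentials start afresh, so the next gap is $\mathrm{Exp}(n-1)$, and so on. The paper stops at this heuristic level and does not formally justify the independence of the spacings. Your change-of-variables computation on the joint density $n!\,e^{-\sum x_k}$ over the ordered simplex, factorizing into $\prod_i (n-i+1)e^{-(n-i+1)d_i}$, is a genuine addition: it is not in the paper, it is short, and it closes precisely the rigor gap you correctly identified. Either piece alone would match or exceed what the paper does; leading with the Jacobian argument and keeping the memory-less story as intuition, as you propose, is a sound plan.
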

\begin{proof}
To see this, consider the minimum, $X_{(1)}$. It's the minimum of $n$ exponentials. And this is just another exponential with rate $n \lambda$. 

$$X_{(1)} \sim \frac{Z_1}{n \lambda}$$

Now, we're at time $X_{(1)}$, and there are $n-1$ exponentials to go. By the memory-less property, the additional time it'll take for the next exponential to arrive from here is simply the minimum of $n-1$ exponentials. So,

$$X_{(2)}\sim \frac{Z_1}{n \lambda} + \frac{Z_2}{(n-1) \lambda}.$$

And we can take this all the way to $i$.
\end{proof}

We can use equations ~\ref{eqn:u} and ~\ref{exp_ord} to get an expression for $\hat{Q}_7(q)$ for the case of the exponential distribution:

\begin{equation}
\hat{Q}_7(q) =  \sum\limits_{j=1}^i \frac{Z_j}{n-j+1} + \frac{\{q(n-1)\}Z_{i+1}}{n-i}
\end{equation}

Here, $i=\lfloor q(n-1)\rfloor+1$. Noting that the true value of the $q$-quartile of the exponential distribution with rate $1$ is $\phi_q = -\log(1-q)$, we can get the bias of the seventh strategy:

\begin{align}
b_q &= \Bbb E(\hat{Q}_7(q))-\phi_q \notag \\
&= \left(\frac{1}{n}+\frac{1}{n-1}+\frac{1}{n-2}+\dots +\frac{1}{n-\lfloor q(n-1)\rfloor}\right)+\frac{\{q(n-1)\}}{n-\lfloor q(n-1)\rfloor-1}+\log(1-q)
\label{expon_bias}
\end{align}

This idea is implemented in the following Python routine:

\begin{lstlisting}[language=Python]
import numpy as np

def prcntile_bias_exponential(q, n):
	lt = int(np.floor(q*(n-1)))
	summ = 0
	for ix in range(lt+1):
		summ += 1/(n-ix)
	# modf is the fractional part
	summ += np.modf(q*(n-1))[0]/(n-lt-1)
	return -np.log(1-q)-summ
\end{lstlisting}

In figure \ref{fig:first}, we plot this bias with the quantile, $q$ for three estimators. The first one is $X_{(i)}$, the second is the linear interpolation estimator given by equation ~\ref{expon_bias} and the third is $X_{(i+1)}$ (with $i$ given by equation ~\ref{eqn:ord_i}). First, we see that the linear interpolation goes a long way in reducing the bias. 
Note that its bias starts off negative for low quantiles and becomes very high for high quantiles (we know it explodes to $\infty$ for the maximum, $X_{(n)}$ for distributions supported on $[a,\infty)$ for any $a$). Since it increases monotonically, starts below $0$ then goes to $\infty$ at $q=1$, there is one quantile for all exponential distributions which becomes unbiased. This happens to be extremely close to the $.6667$-th quantile (or the point that splits the distribution in a $1:2$ ratio).

\begin{figure}
\begin{center}
\includegraphics[height=3in, width=3in]{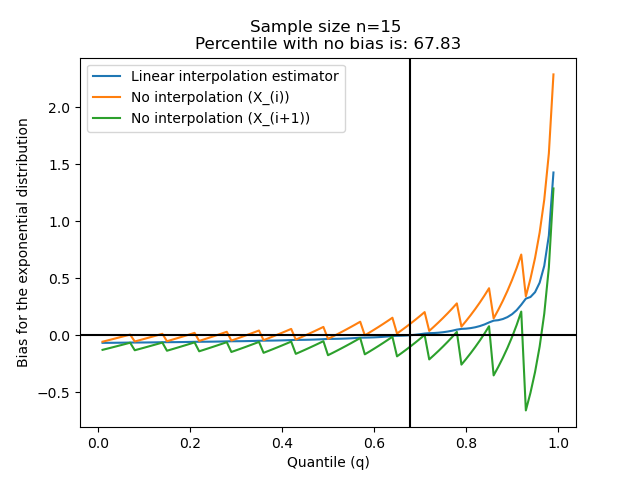}
\end{center}
\caption{The bias of the exponential distribution for linear interpolation and no interpolation estimators. \label{fig:first}}
\end{figure}

\section{Removing bias for the exponential}
\label{sec:verify}
We can use equation ~\ref{expon_bias} not just to measure the bias of strategy-7 on the exponential distribution, but also devise a new strategy that completely eliminates it. We can do this by replacing the linear interpolation strategy with another kind of interpolation specifically designed to eliminate the bias. Let's say the interpolation factor is $f \in \Bbb R$. We get: 

\begin{align}
b_q &= -\log(1-q)-\left(\frac{1}{n}+\frac{1}{n-1}+\frac{1}{n-2}+\dots +\frac{1}{n-\lfloor q(n-1)\rfloor}\right)-\frac{f}{n-\lfloor q(n-1)\rfloor-1} \notag\\
&= 0 \notag\\
=> f &= (n- \lfloor q(n-1)\rfloor-1)\left(-\log(1-q) - \left(\frac{1}{n-1}+\frac{1}{n-2}+\dots +\frac{1}{n-\lfloor q(n-1)\rfloor} \right)\right)
\end{align}

This idea is implemented in the following Python routine:

\begin{lstlisting}[language=Python]
import numpy as np

def expon_frac(q, n):
	lt = int(np.floor(q*(n-1)))
	summ = 0
	for ix in range(lt+1):
		summ += 1/(n-ix)
	return (-np.log(1-q)-summ)*(n-lt-1)
\end{lstlisting}

This new estimator based on removing the bias of the exponential distribution will be referred to as the `no exponential bias estimator' and is given by:

\begin{equation}
\hat{Q}_{10}(q) = f X_{(i)}+(1-f)X_{(i+1)}\label{eqn_q_10_estimator}
\end{equation}

where the $i$ is per equation ~\ref{eqn:ord_i}.

\section{Comparative performance on other distributions}
We've removed the bias for the exponential distribution. But we're more interested in heavy tailed distributions supported on $(0, \infty)$ with decreasing hazard rates. Two such distributions are the LogNormal distribution, which has a decreasing hazard rate (and heavier tailed than the exponential) and the Lomax distribution which is so heavy tailed, even it's mean blows up. In figures \ref{fig:biases} and \ref{fig:mses}, we compare estimators 4 and 7 with the `no exponential bias' estimator on the criterion of bias and MSE respectively for these two and other distributions (when the sample size is $n=15$).

The observations are as follows:

\begin{itemize}
\item{Estimator 4 has a tendency to have large negative biases for low quantiles. The other two estimators don't suffer from this problem}
\item{The `no exponential bias' estimator shows low bias on most distributions across quantiles less than about $0.6$. It does particularly well on low quantiles, always beating the other two. But even around the median, its performance is the most consistent. It performs the best on the LogNormal and comes in a close second on all others.}
\item{Apart from low quantiles, where the performance of estimator-4 really suffers, there isn't a lot of difference between the three.}
\end{itemize}

\begin{figure}
\begin{center}
\includegraphics[height=5in, width=5in]{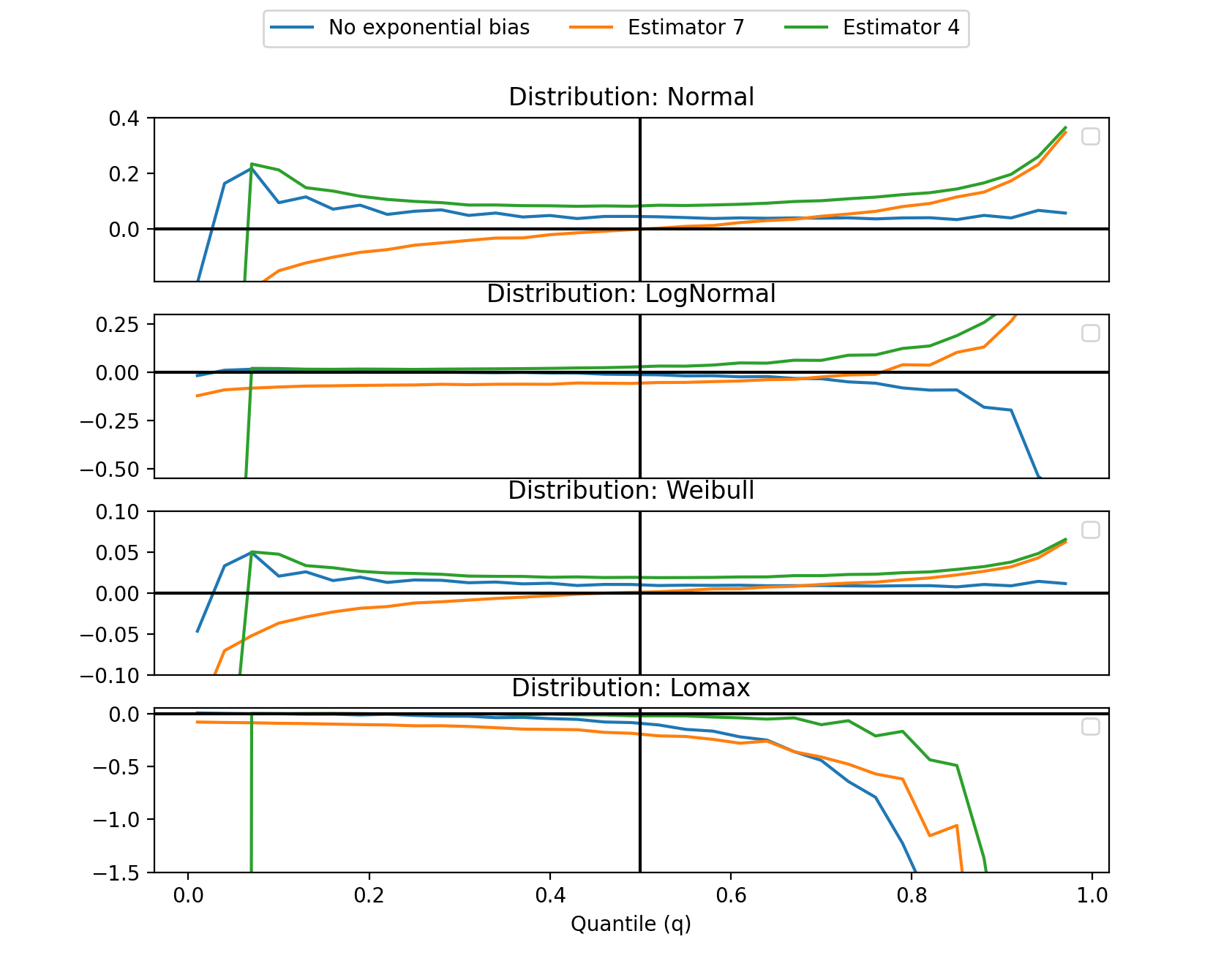}
\end{center}
\caption{The bias of the linear interpolation and no exponential bias estimators on various distributions for sample size $n=15$. \label{fig:biases}}
\end{figure}

\begin{figure}
\begin{center}
\includegraphics[height=5in, width=5in]{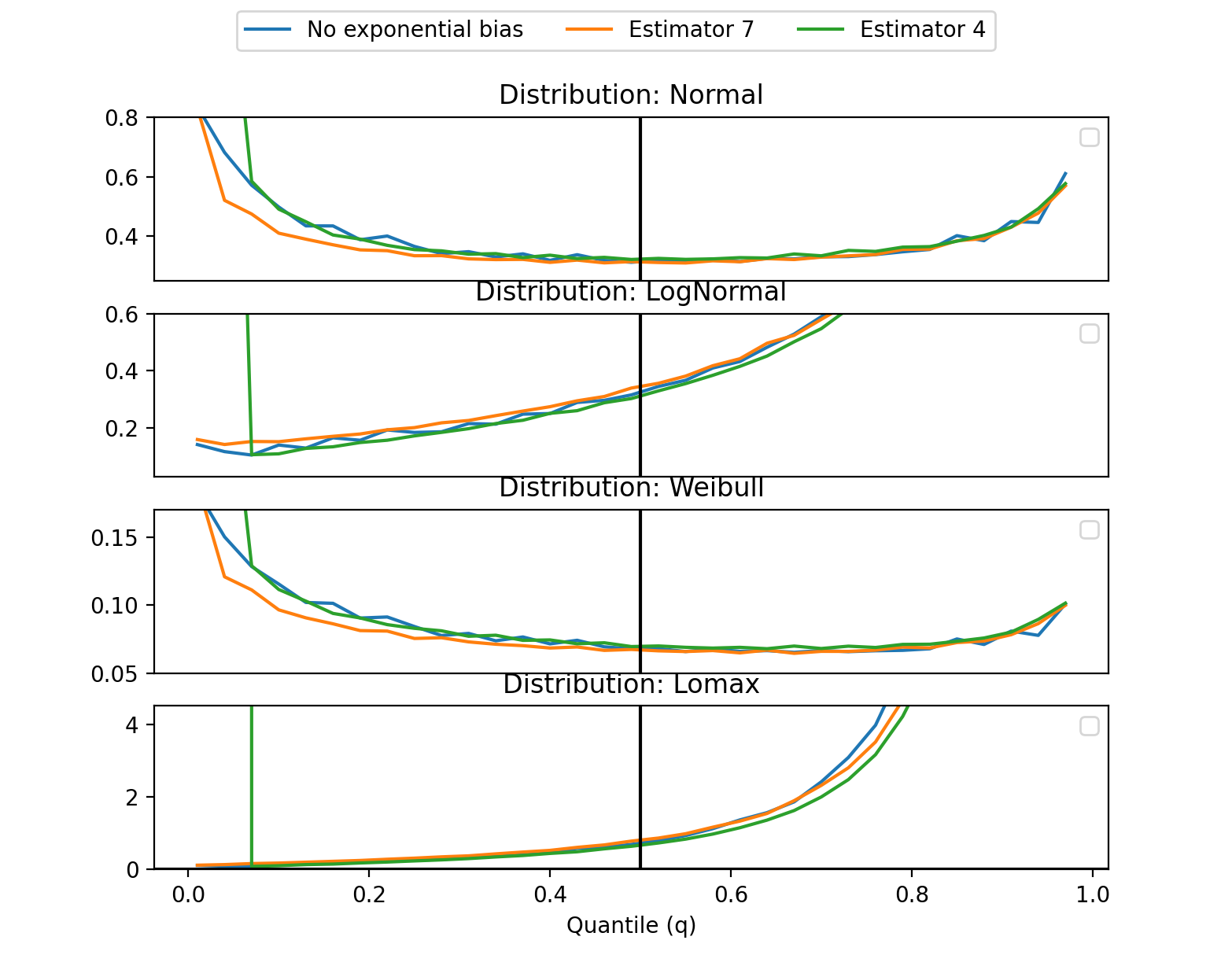}
\end{center}
\caption{The MSE of the linear interpolation and no exponential bias estimators on various distributions . \label{fig:mses}}
\end{figure}

\section{Generalizing the bias removal model}

The motivation of the previous section was to create an estimator that does well on the exponential distribution and then see how well it generalizes to heavy tailed distributions. In this section, we see if we can do even better on the exponential distribution by generalizing the estimator along two directions:

\begin{itemize}
\item{We took $i=\lfloor q(n-1)\rfloor+1$. Let's remove this requirement and just go with a general $i$. We can then ask what $i$ is the best.}
\item{We interpolated between the order statistics, $X_{(i)}$ and $X_{(i+1)}$. Why not include more of the order statistics beyond $X_{(i+1)}$ in our estimator?}
\end{itemize}

In practice, I've observed that going in this direction leads to an estimator that performs better on the exponential distribution, but there either isn't a significant improvement or a downright degradation in the performance on other distributions. 

This is how we define our estimator:

\begin{equation}
\hat{Q}_{11}(q) = f X_{(i)}+\sum\limits_{j=i+1}^m f_j X_{(j)}\label{general_estimator}
\end{equation}

where

$$f + \sum\limits_{j=i+1}^m f_j = 1$$

Note that this idea of including more quantiles is also explored in \cite{sheather}.

For the exponential distribution, we can plug in equation ~\ref{exp_ord} and get:

$$\hat{Q}_{11}(q) = \frac{1}{\lambda}\left(\sum\limits_{j=1}^i \frac{Z_j}{n-j+1} + \sum\limits_{k=1}^m \left(\sum\limits_{l=k}^m f_l\right) \frac{Z_{i+k}}{n-i-k+1} \right)$$

Taking the expectation this gives us:

$$\Bbb E(\hat{Q}_{11}(q)) = \frac{1}{\lambda}\left(\sum\limits_{j=1}^i \frac{1}{n-j+1} + \sum\limits_{k=1}^m  \frac{\sum\limits_{l=k}^m f_l}{n-i-k+1} \right)$$

And the variance of the estimator:

\begin{equation}
\Bbb V(\hat{Q}_{11}(q)) = \frac{1}{\lambda^2}\left(\sum\limits_{j=1}^i \frac{1}{(n-j+1)^2} + \sum\limits_{k=1}^m  \frac{\left(\sum\limits_{l=k}^m f_l\right)^2}{(n-i-k+1)^2} \right)
\label{eqn:variance}
\end{equation}

Now, we'd like to pick our $f$ and $f_j$'s in a way that we keep the estmator unbiased for the exponential distribution (constraint) while minimizing the variance of the estimator (objective function). Removing the terms in the variance that don't depend on the $f$ terms and constant mulipliers leads to the optimization problem:

\begin{equation*}
\begin{aligned}
& \underset{f, f_1, f_2, \dots f_m}{\text{min}}
& & \sum\limits_{k=1}^m \left(\sum\limits_{l=k}^m f_l\right)^2 \frac{1}{(n-i-k+1)^2}  \\
& \text{subject to}
& & \sum\limits_{j=1}^n \frac{1}{n-j+1} + \sum\limits_{k=1}^m \left(\sum\limits_{l=k}^m f_l\right) \frac{1}{n-i-k+1}  = -\log(1-q)
\end{aligned}
\end{equation*}
 
We can use the method of Lagrange multipliers to convert this optimization problem into a system of equations. Let's call $\beta$ the Lagrange multiplier for our single constraint (the zero bias condition). The Lagrangian becomes:

$$L(\beta, \vec{f}) = \sum\limits_{k=1}^m \frac{\left(\sum\limits_{l=k}^m f_l\right)^2 }{(n-i-k+1)^2} \\+ \beta \left(\sum\limits_{j=1}^i \frac{1}{n-j+1} + \sum\limits_{k=1}^m \frac{\sum\limits_{l=k}^m f_l}{n-i-k+1} +\log(1-q)\right)$$

The system of equations we need to solve are obtained by taking derivatives with respect to all the $f$'s and $\beta$ and setting them to zero. Taking the derivative with respect to $f_1$ first,

\begin{align}
\frac{2 \sum\limits_{j=1}^m f_j}{(n-i)^2} +\frac{\beta}{(n-i)} &=0\notag \\
=>\sum\limits_{j=1}^m f_j &= \frac{-\beta (n-i)}{2}\label{eqn:f1}
\end{align}

And now taking the derivative with respect to $f_2$,

\begin{align}
\frac{2 \sum\limits_{j=1}^m f_j}{(n-i)^2} + \frac{2 \sum\limits_{j=2}^m f_j}{(n-i-1)^2}+\frac{\beta}{(n-i)}+\frac{\beta}{(n-i-1)} &=0\notag \\
=>\sum\limits_{j=2}^m f_j &= \frac{-\beta (n-i-1)}{2}\label{eqn:f2}
\end{align}

Note that in equation ~\ref{eqn:f2}, we used ~\ref{eqn:f1} to cancel terms. Subtracting the two equations above we get $f_1 = -\frac{\beta}{2}$.

And taking derivatives similarly with respect to the other $f_l$ we'll get in general:

\begin{equation}
\sum\limits_{j=l}^m f_j = \frac{-\beta (n-i-(l-1))}{2}\label{eqn:sum_f}
\end{equation}

subtracting successive equations, all the $f_l$'s simply become $-\frac{\beta}{2}$ except the very last one:

\begin{align}
f_1, f_2, \dots f_{m-1} = -\frac{\beta}{2} \notag \\
f_m =  -\frac{\beta (n-i-(m-1))}{2} \label{the_fs}
\end{align}

And taking derivative with respect to $\beta$ and using equation ~\ref{the_fs},

\begin{align}
\sum\limits_{j=1}^n \frac{1}{n-j+1} -\frac{m \beta}{2} +\log(1-q) = 0 \notag \\
=> \beta = \left(\sum\limits_{j=1}^n \frac{1}{n-j+1}+ \log(1-q)\right)\frac{2}{m} = \frac{2b}{m}
\end{align}

Here, $b$ is the bias if we use just $X_{(i)}$ as the estimator. From equations ~\ref{eqn:sum_f} and ~\ref{eqn:variance} we get the variance of this estimator (the bias is $0$ by design):

\begin{align}
\Bbb V(\hat{Q}_{11}(q)) &= \sum\limits_{j=1}^i \frac{1}{(n-j+1)^2}+\frac{m \beta^2}{2} \notag\\
 &= \sum\limits_{j=1}^i \frac{1}{(n-j+1)^2}+\frac{b^2}{m} \label{eqn:var}
\end{align}

In other words, the higher we make $m$, the lower the variance becomes. This suggests that we should include all possible $m=n-i$ terms for the lowest possible variance (for the exponential distribution). If we do this, we get conveniently (from equation ~\ref{the_fs}):

$$f_1, f_2, \dots f_{m} = -\frac{\beta}{2}$$

The next logical question is, what $i$ one should pick? To minimize the variance given by equation ~\ref{eqn:var}, it turns out that the lower the $i$, the lower the variance. And the lowest possible value $i$ can take is $0$, with $X_{(0)}$ interpreted as the minimum possible value the exponential distribution can take ($0$). Since we intend to take all the $f_l$'s, we know that all of them are $-\frac{\beta}{2}$. Plugging $i=0$ and $m=n$ into equation ~\ref{general_estimator} we get:

\begin{align*}
\hat{Q}_{11}(q) &= f X_{(0)}+\sum\limits_{j=1}^n f_j X_{(j)}\\
&= -\frac{\beta}{2} \sum\limits_{j=1}^n X_{(j)}\\
&= -\left(\sum\limits_{j=1}^n \frac{1}{n-j+1} +\log(1-q)\right) \frac{\sum\limits_{j=1}^n X_{j}}{n}
\end{align*}

The ability of the estimator to remove the bias has been compromized since we took $X_{(0)}=0$. But we can just make the term in the equation above unbiased again and get:

$$\hat{Q}_{11}(q) = -\frac{\sum X_j}{n} \log(1-q)$$

This is equivalent to taking the MLE (maximum likelihood estimator) of the exponential distribution on the data and returning its relevant quantile. And while its true that this estimator does very well on the exponential distribution, with no bias and the smallest variance among all other estimators, it doesn't generalize well to other distributions. It is therefore a case of `overfitting' to the exponential distribution.

\section{Conclusion}
\label{sec:conc}
It seems like creating an estimator that works well for the exponential distribution generalizes to other distributions that might more realistically model our data as well. The `model penalty', which is what we pay for assuming the wrong distribution depends on the quantiles were estimating. If we're estimating low percentiles up until the 60th (hence encompassing the median), the penalty is low and our estimator generalizes well. If we're estimating the tail however the model penalty becomes high and it becomes counter-productive to use the more complicated `no exponential bias' estimator. 

Now, the best estimator for the exponential distribution involves taking the maximum likelihood estimate of the rate parameter (on the observed samples) and then calculating its quantile. While it is true that this estimator performs better on the exponential distribution (also zero bias but much lower standard deviation), it doesn't generalize well to heavier tailed distributions.

\bigskip
\begin{center}
{\large\bf SUPPLEMENTARY MATERIAL}
\end{center}

\begin{description}

\item[Python-package for  the new quantile estimation method:] Python-package \href{https://github.com/ryu577/statest}{statest} contains code to perform the diagnostic methods described in the article.

\end{description}

\end{document}